\tikzstyle arrowstyle=[scale=1]
\tikzstyle directed=[postaction={decorate,decoration={markings,
    mark=at position .65 with {\arrow[arrowstyle]{latex}}}}]
\newcommand{\lp}{\left(}
\newcommand{\rp}{\right)}
\newcommand{\tH}{\tilde{H}}
\newcommand{\R}{\mathbb{R}}
\newcommand{\N}{\mathbb{N}}
\newcommand{\Nz}{\N_0}
\newcommand{\Z}{\mathbb{Z}}
\newcommand{\Mz}{{\Z_-}}
\newcommand{\C}{\mathbb{C}}
\newcommand{\Q}{\mathbb{Q}}
\newcommand{\cM}{\mathcal{M}}
\newcommand{\cZ}{\mathcal{Z}}
\newcommand{\Klam}{K^{(\lambda)}}
\newcommand{\Jlam}{J^{(\lambda)}}
\newcommand{\Dlam}{D^{(\lambda)}}
\newcommand{\Rlam}{R^{(\lambda)}}
\newcommand{\KM}{K_M}
\newcommand{\KMpos}{{\KM^+}}
\newcommand{\KMneg}{{\KM^-}}
\newcommand{\KMn}{{\KM^{(n)}}}
\newcommand{\LMn}{{L_M^{(n)}}}
\newcommand{\LM}{L_M}
\newcommand{\hH}{\hat{H}}
\newcommand{\HM}{H_M}
\newcommand{\Hk}{\operatorname{Hk}}
\newcommand{\hk}{\operatorname{hk}}
\newcommand{\Slam}{S^{(\lambda)}}
\newcommand{\Psilam}{\Psi^{(\lambda)}}
\newcommand{\Philam}{\Phi^{(\lambda)}}
\newcommand{\Mlam}{M^{(\lambda)}}
\newcommand{\lspan}{\operatorname{span}}
\definecolor{ao(english)}{rgb}{0.0, 0.5, 0.0}
\renewcommand{\boxed}[1]{\raisebox{1pt}{\text{\fboxsep=1pt\fbox{\m@th$\displaystyle#1$}}}}
\newcommand{\filledbox}{\tikz{\fill circle (3pt)}}
\newcommand{\emptybox}{\tikz{\draw circle (3pt)}}
\newcommand{\Wr}{\operatorname{Wr}}
\newcommand{\bt}{{\boldsymbol{t}}}
\newcommand{\flip}{f}
\newtheorem{theorem}{Theorem}
\newtheorem{proposition}[theorem]{Proposition}
\DeclareMathOperator{\bX}{\mathbf{X}}
\DeclareMathOperator{\bV}{\mathbf{V}}
\begin{document}

\title[Coherent States and Rational Extensions]{Constructing Coherent
  States for the Rational Extensions of the Harmonic Oscillator
  Potential} \author{Zo\'e McIntyre} \address{Department of Physics,
  McGill U., Montr\'eal QC Canada H3A 2T8}
\email{zoe.mcintyre@mail.mcgill.ca}

\author{Robert Milson}
\address{Dept. of Mathematics and Statistics, Dalhousie U.,
  Halifax NS, Canada B3H 3J5} \email{rmilson@dal.ca}

\begin{abstract}
  Using the formalism of Maya diagrams and ladder operators, we
  describe the algebra of annihilating operators for the class of
  rational extensions of the harmonic oscillators.  This allows us to
  construct the corresponding coherent state in the sense of Barut and
  Girardello.  The resulting time-dependent function is an exact
  solution of the time-dependent Schrodinger equation and a joint
  eigenfunction of the algebra of annihilators.
\end{abstract}

\maketitle

\section*{Introduction}

Supersymmetric quantum mechanics (SUSYQM) has proven to be a key
technique in the construction of exactly solvable potentials and in
the understanding of shape-invariance. The supersymmetric partners of
the harmonic oscillator Hamiltonian are known as rational extensions
because the corresponding potentials consist of the harmonic
oscillator potential plus a rational term that vanishes at infinity.
 
There has been recent interest in rational extensions possessing
ladder operators, which may furnish higher-order analogues of
annihilation operators introduced in the second quantization of the
harmonic oscillator. Ladder operators have applications in the study
of superintegrable systems and rational solutions of Painlev\'e
equations; they also provide a clear avenue for generalizing one of
the defining properties of the canonical coherent states.

In this work, we present a condition, expressed in terms of the Maya
diagram associated with a rational extension, for such a ladder
operator to be designated an annihilation operator. We then construct
the coherent states of the rational extensions as eigenstates of these
annihilation operators.

Note: this is a preliminary version of this article, and will be
revised before publication.
\section*{Preliminaries}
\subsection*{Partitions and Maya diagrams}
A \textit{partition} of a natural number $N$ is a non-increasing
integer sequence $\lambda_1 \geq \lambda_2 \geq \cdots$ such that
\[ |\lambda| := \sum_{i=1}^\infty \lambda_i = N.\] Implicit in this
definition is the assumption that $\lambda_i=0$ for $i$ sufficiently
large.  The length $\ell$ of $\lambda$ is the number of non-zero
elements of the sequence.  It is useful to represent a partition using
a Young diagram
\[ Y_\lambda = \{ (i,j) \colon 1\le i \le \ell,\; 1\le j \le \lambda_i
  \},\] consisting of $\lambda_i$ cells in rows $i=1,\ldots, \ell$.
The hook
\[ \Hk_\lambda(i,j) = \{ (i,k)\in Y_\lambda \colon j\le k \} \cup \{
  (k,j)\in Y_\lambda \colon i\le k \} \] is the set of cells connecting
cell $(i,j)$ to the rim of the diagram.  The hooklength
$\hk_\lambda(i,j)$ is the cardinality of hook $(i,j)\in Y_\lambda$; the
number
\begin{equation}
  \label{eq:dlamdef}
  d_\lambda = \frac{N!}{\prod_{(i,j)\in Y_\lambda}  \hk_\lambda(i,j) }
\end{equation}
counts the number of standard Young tableaux of shape $\lambda$ and
 corresponds to the dimension of an irreducible representation of the symmetric
group $\mathfrak{S}_N$.

\begin{figure}
  \centering
  \begin{tikzpicture}[scale=0.8]
    \draw[step=1cm,black,very thick] (0,0) grid (2,5);
    \draw[step=1cm,black,very thick] (2,0) grid (4,3);
    \draw[step=1cm,black,very thick] (4,0) grid (5,2);
    \path (0.5,0.5) node {9}
    ++(1,0) node {8}
    ++(1,0) node {5}
    ++(1,0) node {4}
    ++(1,0) node {2};
    \path (0.5,1.5) node {8}
    ++(1,0) node {7}
    ++(1,0) node {4}
    ++(1,0) node {3}
    ++(1,0) node {1};
    \path (0.5,2.5) node {7}
    ++(1,0) node {6}
    ++(1,0) node {2}
    ++(1,0) node {1};
    \path (0.5,3.5) node {3} ++(1,0) node {2};
    \path (0.5,4.5) node {2} ++(1,0) node {1};
  \end{tikzpicture}
  \caption{The Young diagram and corresponding hooklengths for the
    partition $(5,5,4,2,2)$.}
\end{figure}

Partitions are closely related to a concept called a \emph{Maya
  diagram}.  We say that a set of integers $M\subset \Z$ is a Maya
diagram if
\[ K_M^+ = \{ m\in M \colon m\geq 0\},\quad K_M^- =\{ m\in \Z\setminus
  M \colon m<0 \} \] are finite sets.  In other words, a Maya diagram
is a subset of $\Z$ that contains a finite number of positive integers
and excludes a finite number of negative integers.  The group $\Z$
acts on $\cM$ by translations, since for $M\in\cM$ and $n\in\Z$,
\begin{equation}
\label{m+n}
M+n=\{m+n:m\in M\}
\end{equation}
is also a Maya diagram.  We will refer to the equivalence class $M/\Z$ of a
partition modulo translations as an \emph{unlabelled} Maya
diagram.  An unlabelled Maya diagram can be represented as a
horizontal sequence of filled $\filledbox$ and empty $\emptybox$
states beginning with an infinite segment of $\filledbox$ and
terminating with an infinite segment of $\emptybox$, where
$\filledbox$ in position $m$ is taken to indicate membership $m\in M$.
A choice of origin serves to convert an unlabelled Maya diagram to a
subset of $\Z$.

There is a natural bijection between the set of partitions and the set
of unlabelled Maya diagrams. For a given partition $\lambda$, the set
\begin{equation}
  \label{eq:Mlam}
  \Mlam = \{ \lambda_i-i \colon i = 1,2,\ldots \}
\end{equation}
is a Maya diagram. 
\begin{proposition}
  \label{prop:Mlambda}
  Every Maya diagram is a translate of a unique $\Mlam$.
\end{proposition}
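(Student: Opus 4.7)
The plan is to introduce the \emph{charge} $c(M) := |\KMpos| - |\KMneg|$ of a Maya diagram and to prove two facts whose combination yields the proposition: (a) each $\Z$-orbit under translation contains exactly one element of charge zero, and (b) the assignment $\lambda \mapsto \Mlam$ is a bijection from partitions onto Maya diagrams of charge zero.

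For (a) I would verify directly that $c(M+1) = c(M)+1$ by a short case analysis on whether $-1 \in M$. If $-1 \in M$, the integer $0$ is added to $\KMpos$ after translation while $\KMneg$ retains its cardinality; if $-1 \notin M$, the gap at $-1$ disappears from $\KMneg$ after translation while $\KMpos$ is unchanged. Either way the charge increments by one, so each $\Z$-orbit has a unique charge-zero representative.

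For (b), the fact that $\Mlam$ has charge zero follows from the Frobenius / Durfee-square symmetry: both $|\KMpos|$ and $|\KMneg|$ equal the side length of the Durfee square of $Y_\lambda$. To build the inverse, given a charge-zero Maya diagram $M$, enumerate its elements in strictly decreasing order $m_1 > m_2 > \cdots$ (well defined since $M$ has finitely many nonnegative elements and cofinitely many negative ones) and set $\lambda_i := m_i + i$. Strict monotonicity of $(m_i)$ gives $\lambda_i \geq \lambda_{i+1}$, and a short count shows $m_i = -i$ for $i$ sufficiently large, so $\lambda$ is eventually zero; the identity $\lambda_i - i = m_i$ then forces $\Mlam = M$.

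The step I expect to be the main obstacle is showing $\lambda_i \geq 0$ for every $i$, equivalently $m_i \geq -i$. For $i \leq |\KMpos|$ this is automatic since then $m_i \geq 0$. For $i > |\KMpos|$ the element $m_i$ is the $(i - |\KMpos|)$-th largest negative element of $M$, and the inequality $m_i \geq -i$ reduces to $|\KMneg \cap [-i,-1]| \leq |\KMpos|$, which is precisely the charge-zero hypothesis restricted to the window $[-i,-1]$. Uniqueness in (b) is then immediate because the forward and inverse maps are explicit and mutually inverse.
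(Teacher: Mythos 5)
Your argument is correct and takes essentially the same route as the paper: your charge $c(M)$ is precisely the paper's index $\sigma_M$, your translation-equivariance step is Eq.~\eqref{indextranslate}, and your inverse map $\lambda_i = m_i + i$ is the paper's assignment $\lambda_i = m_i + i - \sigma_M$ specialized to the charge-zero representative of the orbit. The one remark worth making is that the step you single out as the main obstacle, namely $\lambda_i \ge 0$, comes for free: strict decrease of $(m_i)$ makes $(\lambda_i)$ non-increasing, and since $\lambda_i = 0$ for all sufficiently large $i$, nonnegativity of every $\lambda_i$ is automatic without any counting in the window $[-i,-1]$.
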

\begin{proof}
Indeed, let $M\subset \Z$ be a Maya
diagram and $m_1>m_2>\dotsm$ the strictly decreasing sequential
ordering of the elements $m\in M$.  Set
\begin{equation}
  \label{eq:sigmaM}
  \sigma_M = \# \KMpos- \# \KMneg\\
\end{equation}
where $\#$ denotes the cardinality of a finite set. By construction,
\begin{equation}
\label{indextranslate}
\sigma_{M+n}=\sigma_M+n,\quad n\in \Z.
\end{equation}
Hence, there exists a sufficiently large $\ell$ such that
$m_i=-i+\sigma_M$ for all $i> \ell$. 
Next, let
\begin{equation}
  \label{eq:lamfromM}
  \lambda_i = m_i+ i-\sigma_M,\quad i=1,2,\ldots.
\end{equation}
It follows that $\lambda_i=0$ for all $i> \ell$;
i.e., $\lambda$ is a partition. Furthermore, by construction,
\[ M = \Mlam + \sigma_M.\]
\end{proof}
\noindent
We will refer to the integer $\sigma_M$ defined in \eqref{eq:sigmaM}
as the index of $M$, and to the set
\begin{equation}
  \label{eq:KM+-}
   \KM = \KM^+ \cup \KM^-  
\end{equation}
as the index set of $M$.  

The hooklength formula \eqref{eq:dlamdef} can be re-expressed in terms
of a Maya diagram as follows.  Let $\Klam$ denote the index set of
$\Mlam+\ell$, with $k_i= \lambda_i-i+\ell\; (i=1,2,\ldots,\ell)$ 
the decreasing enumeration of $\Klam$.  If $\lambda$ is the partition
corresponding to $M$, then
\[ \prod_{i,j} \hk_\lambda(i,j) = \frac{\prod_i
    k_i!}{\prod_{i<j}(k_i-k_j)}.\]

The bijection between unlabelled Maya diagrams and partitions can be
visualized by representing a filled state with a unit downward arrow
and an empty state with a unit right arrow.  As can be seen in Figure
\ref{fig:bentmaya}, the resulting ``bent'' Maya diagram traces out the
boundary of the Young diagram of the corresponding partition
$\lambda$; see \cite{GGM18} for more details.

\begin{figure}
  \centering
  \begin{tikzpicture}

\draw[step=1cm,gray,dotted,very thin] (0,0) grid (7,8);

\draw[dashed,thick] (0,5) -- +(0,-5) -- +(5,-5);

\draw[line width=2pt,->]    (0,8)  -- ++(0,-1)-- ++(0,-1)-- ++(0,-1);
\draw[line width=2pt,->]    (0,8-3)  -- ++(1,0) -- ++(1,0);
\draw[line width=2pt,->]    (0+2,8-3)  -- ++(0,-1)-- ++(0,-1);
\draw[line width=2pt,->]    (0+2,8-5)  -- ++(1,0) -- ++(1,0);
\draw[line width=2pt,->]    (0+4,8-5)  -- ++(0,-1);
\draw[line width=2pt,->]    (0+4,8-6)  -- ++(1,0 );
\draw[line width=2pt,->]    (5,2)  -- ++(0,-1)-- ++(0,-1);
\draw[line width=2pt,->]    (5,0)  -- ++(2,0);

\draw[thick] (5,0) -- ++(3,0);

\foreach \y  in {5,...,8} {
\draw[dashed,color=gray] (0,\y) -- (5-\y/2,5+\y/2);
};  

\foreach \x  in {0,...,2} {
\draw[dashed,color=gray] (\x,5) -- (2.5+\x/2,7.5-\x/2);
};  

\foreach \y  in {3,...,5} {
\draw[dashed,color=gray] (2,\y) -- (6-\y/2,4+\y/2);
};  

\foreach \x  in {2,...,4} {
\draw[dashed,color=gray] (\x,3) -- (3.5+\x/2,6.5-\x/2);
};  

\foreach \y  in {2,...,3} {
\draw[dashed,color=gray] (4,\y) -- (7-\y/2,3+\y/2);
};  

\foreach \y  in {0,...,2} {
\draw[dashed,color=gray] (5,\y) -- (7.5-\y/2,2.5+\y/2);
};  

\foreach \x  in {5,...,7} {
\draw[dashed,color=gray] (\x,0) -- (5+\x/2,5-\x/2);
};

\draw[fill=black]
    (1.25,8.75) circle (4pt) node[above=4pt] {$-8$}
  ++ (0.5,-0.5)  circle (4pt) node[above=4pt] {$-7$}
  ++ (0.5,-0.5)  circle (4pt) node[above=4pt] {$-6$};

\draw  (1.25,8.75) ++ (1.5,-1.5)
  circle (4pt) node[above=4pt] {$-5$}
 ++ (0.5,-0.5)  circle (4pt) node[above=4pt] {$-4$};

\draw[fill=black]
    (1.25+2.5,8.75-2.5) circle (4pt) node[above=4pt] {$-3$}
  ++ (0.5,-0.5)  circle (4pt) node[above=4pt] {$-2$};

\draw  (1.25+3.5,8.75-3.5)
  circle (4pt) node[above=2pt] {$-1$}
 ++ (0.5,-0.5)  circle (4pt) node[above=2pt] {$0$};

\draw[fill=black]  (1.25+4.5,8.75-4.5)
  circle (4pt) node[above=2pt] {$1$};

\draw  (1.25+5,8.75-5)
  circle (4pt) node[above=2pt] {$2$};

\draw[fill=black]  (1.25+5.5,8.75-5.5)
  circle (4pt) node[above=2pt] {$3$}
++ (0.5,-0.5)  circle (4pt) node[above=2pt] {$4$};

\draw  (1.25+6.5,8.75-6.5)
  circle (4pt) node[above=2pt] {$5$}
 ++ (0.5,-0.5)  circle (4pt) node[above=2pt] {$6$};

\path (3,-0.5) node[font=\itshape,anchor=north,align=center]
{Young Diagram};


\end{tikzpicture}

\caption{The bent Maya diagram with index set $K=\{4,3,1,-1,-4,-5 \}$
  is the rim of the Young diagram of the corresponding partition
  $\lambda=(5,5,4,2,2)$.}
\label{fig:bentmaya}
\end{figure}

Let $\cM$ denote the set of all Maya diagrams.  The flip $\flip_{k}$ at
position $k\in\mathbb{Z}$ is the involution
$\flip_k:\mathcal{M}\rightarrow\mathcal{M}$ defined by
\begin{equation}
\flip_k:M\mapsto \begin{cases}
M\cup\{k\},& k\notin M\\
M\setminus\{k\},&k\in M
\end{cases}.
\end{equation}
\noindent
In the event that $k\notin M$, the flip $\flip_k$ is said to act on $M$
by a state-deleting transformation $\emptybox\rightarrow\filledbox$,
while in the opposite scenario ($k\in M$), it is said to act by a
state-adding transformation $\filledbox\rightarrow\emptybox$.

Let $\cZ$ denote the set of all finite subsets of $\Z$. For a finite
set of integers $K=\{ k_1,\ldots, k_p \}\in \cZ$ we define the
corresponding multi-flip to be the transformation
$\flip_K:\cM\rightarrow\cM$ defined according to
\begin{equation}
\flip_K(M)=(\flip_{k_1}\circ\dotsm\circ \flip_{k_p})(M).
\end{equation}

Observe that multi-flips are also involutions. This means that Maya
diagrams together with multifips have the natural structure of a
complete graph $(\cM,\cZ)$.  The unique edge connecting Maya diagrams
$M_1, M_2$ is the integer set
\begin{equation}
\label{edge}
K=M_1\ominus M_2=M_2\ominus M_1,
\end{equation}
where
\[M_1 \ominus M_2 := (M_1\setminus M_2) \cup (M_2\setminus M_1) \] is
the symmetric difference operation.

Equivalently, we may identify Eq$.$
\eqref{edge} as giving the edge between $M_1$ and $M_2$; i.e., the
unique multiflip satisfying $\flip_K(M_1)=M_2$ and $\flip_K(M_2)=M_1$.

Since $(\cM,\cZ)$ is a complete graph, we can define a bijection
$\cZ\rightarrow \mathcal{M}$ given by $K\mapsto \flip_K(\Mz)$, where
\begin{equation}
  \label{eq:Mzdef}
  \Mz = \{ m \in \Z \colon m<0 \}
\end{equation}
denotes the trivial Maya diagram.  Inversely, for $M\in \cM$, the
unique $K\in \cZ$ such that $M=\flip_K(\Mz)$ is simply the index set
$K=K_M$ defined in \eqref{eq:KM+-}.

\subsection*{Vertex operators and Schur functions}
For $k\in \Nz$, define the ordinary Bell polynomials
$B_k(t_1,\ldots, t_k) \in \Q[t_1,\ldots, t_k]$ as the
coefficients of the power generating function
\begin{equation}
  \label{eq:Bgf}
    \exp\left(\sum_{k=1}^\infty t_k z^k\right)
    = \sum_{k=0}^\infty B_k(t_1,\ldots, t_k) z^k,
\end{equation}
where $ \bt=(t_1,t_2,\ldots)$.  The multinomial formula implies that
\begin{equation}
  \label{eq:Bksum}
  \begin{aligned} B_k(t_1,\ldots, t_k) &= \sum_{ \Vert \mu\Vert=k} \frac{t^{\mu_1}_1}{\mu_1!}
\frac{t^{\mu_2}_2}{\mu_2!}  \cdots \frac{t^{\mu_{k}}_{k}}{\mu_{k}!},\qquad
\Vert \mu \Vert = \mu_1 + 2\mu_2 + \cdots + {k} \mu_{k}\\ &=
\frac{t_1^k}{k!} + \frac{t_1^{k-2}t_2}{(k-2)!} + \cdots + t_{k-1} t_1
+ t_k.
  \end{aligned}
\end{equation}
For a partition $\lambda$ of $N$, define the Schur function
$\Slam(t_1,\ldots, t_N)\in \Q[t_1,\ldots, t_N]$ to be the multivariate
polynomial
 \begin{equation}
   \label{eq:Slamdef} \Slam =
   \det(B_{m_i+j})_{i,j=1}^\ell,
 \end{equation}
 where
 \[ m_i = \lambda_i-i, \] and where $B_k=0$ when $k<0$.  Moreover,
 since
 \[ \partial_{t_i} B_j(t_1,\ldots, t_j) = B_{j-i}(t_1,\ldots,
   t_{j-i}),\quad j\geq i,\] we may re-express \eqref{eq:Slamdef} in
 terms of a Wronskian determinant,
\begin{equation}
  \label{eq:Slamwronsk}
  \Slam = \Wr[B_{m_\ell+\ell},\ldots, B_{m_1+\ell}],
\end{equation}
where the Wronskian is taken with respect to $t_1$.

Let $\bX_m= \bX_m(\bt,\partial_\bt),\; m\in \Z$, be the operators
defined by the generating function
\begin{equation}
  \label{eq:bVdef}
\begin{aligned}
  \bV(\bt,\partial_\bt,z)
  &=\exp\lp \sum_{k=1}^\infty  t_k z^k \rp \exp\lp \sum_{j=1}^\infty
    - j^{-1}\partial_{t_j} z^{-j} \rp \\
  &= \sum_{m=-\infty}^\infty \bX_m(\bt,\partial_\bt)    z^m.
\end{aligned}
\end{equation}
Expanding the above formulas gives
\begin{align}
  \label{eq:Xm+}
  \bX_m &= \sum_{j=0}^\infty B_{j+m}( t_1,\ldots,  t_k) B_j\lp
          \partial_{t_1},\ldots,  j^{-1}
          \partial_{t_j}\rp,\; m\ge 0;\\  
  \label{eq:Xm-}
  \bX_m &= \sum_{j=0}^\infty B_{j}( t_1,\ldots,  t_k) B_{j-m}\lp
          -\partial_{t_1},\ldots, - j^{-1}          \partial_{t_j}\rp,\; m<0.
\end{align}
It can be shown that the above operators obey the 
fundamental relation
\begin{align}
  \label{eq:XmXn}
  &\bX_m \bX_n + \bX_{n-1} \bX_{m+1} = 0.
\end{align}
Despite the fact that the $\bX_m(\bt,\partial_\bt)$ are differential
operators involving infinitely many variables, they have a
well-defined action on polynomials.  In particular, when applied to
Schur functions, they function as multi-variable raising operators.
\begin{proposition}
  \label{prop:SlamXl}
  For every partition $\lambda$ of length $\ell$, we have
  \begin{equation}
    \label{eq:SlamXlam}
    \Slam = \bX_{\lambda_1} \cdots \bX_{\lambda_{\ell}} 1,
  \end{equation}
  where $1$ is the Schur function corresponding to the trivial partition.
\end{proposition}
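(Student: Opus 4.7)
The plan is to reduce the identity to a single closed-form generating-function computation for $\bV(z_1)\cdots \bV(z_\ell)\cdot 1$, from which $\Slam$ emerges on coefficient extraction. Split $\bV(z) = e^{\phi_+(z)} e^{\phi_-(z)}$ with $\phi_+(z) = \sum_{k\ge 1} t_k z^k$ and $\phi_-(z) = -\sum_{j\ge 1} j^{-1}\partial_{t_j} z^{-j}$. Since $\phi_-(z)$ has no constant term, $e^{\phi_-(z)}\cdot 1 = 1$; in particular $\bV(z)\cdot 1 = \exp\!\big(\sum_k t_k z^k\big) = \sum_{m\ge 0} B_m z^m$, which recovers $\bX_m\cdot 1 = B_m$ and settles the case $\ell=1$.

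The main step is the vertex-operator contraction. A direct computation gives the scalar commutator
\[
[\phi_-(z),\phi_+(w)] \;=\; -\sum_{j\ge 1} j^{-1}(w/z)^j \;=\; \log(1-w/z),
\]
so Baker--Campbell--Hausdorff (applicable because the commutator is central) yields $e^{\phi_-(z)} e^{\phi_+(w)} = (1 - w/z)\, e^{\phi_+(w)} e^{\phi_-(z)}$. Iterating this to push every $e^{\phi_-(z_i)}$ to the right of every $e^{\phi_+(z_j)}$ with $j>i$ in the product $\bV(z_1)\cdots \bV(z_\ell)\cdot 1$, and then using $e^{\phi_-(z_i)}\cdot 1 = 1$, produces
\[
\bV(z_1)\cdots \bV(z_\ell)\cdot 1 \;=\; \prod_{1\le i<j\le \ell}\!\Big(1-\tfrac{z_j}{z_i}\Big)\, \exp\!\Big(\sum_{k\ge 1} t_k \sum_{i=1}^\ell z_i^k\Big).
\]

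To finish, I would extract the coefficient of $z_1^{\lambda_1}\cdots z_\ell^{\lambda_\ell}$ on both sides. On the left, by \eqref{eq:bVdef}, this coefficient is exactly $\bX_{\lambda_1}\cdots \bX_{\lambda_\ell}\cdot 1$. On the right, rewrite the prefactor through the Vandermonde,
\[
\prod_{i<j}\!\Big(1-\tfrac{z_j}{z_i}\Big) \;=\; \prod_{i=1}^\ell z_i^{-(\ell-i)}\cdot \det\!\big(z_i^{\ell-j}\big)_{i,j=1}^\ell,
\]
and expand $\exp\!\big(\sum_k t_k z_i^k\big) = \sum_{m\ge 0} B_m z_i^m$. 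Collecting terms, matching the exponent of each $z_i$ to $\lambda_i$ forces $m_i = \lambda_i - i + \sigma(i)$ for the permutation $\sigma$ coming from the determinant, so the coefficient becomes $\sum_{\sigma \in \mathfrak{S}_\ell} \operatorname{sgn}(\sigma)\prod_i B_{\lambda_i - i + \sigma(i)} = \det\!\big(B_{\lambda_i - i + j}\big)_{i,j=1}^\ell$, which is $\Slam$ by \eqref{eq:Slamdef} (with $B_k = 0$ for $k<0$).

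The main obstacle is the formal-power-series bookkeeping in the BCH step: because $\phi_\pm(z)$ involve infinitely many variables, one must check that every coefficient of a monomial $z_1^{a_1}\cdots z_\ell^{a_\ell}$ in the intermediate identities is a finite sum, so that the reordering of $e^{\phi_+}$ and $e^{\phi_-}$ factors is well-defined when applied to a polynomial. This is essentially the same well-definedness point remarked on just after \eqref{eq:XmXn}; indeed, the relation \eqref{eq:XmXn} is itself the coefficient form of the two-variable contraction and can serve as a sanity check, or as the basis for an alternative inductive proof via a Pieri-type rule $\bX_{\lambda_1}\, S^{(\lambda_2,\ldots,\lambda_\ell)} = \Slam$.
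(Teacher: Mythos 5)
Your proof is correct: the normal-ordering identity $e^{\phi_-(z)}e^{\phi_+(w)}=(1-w/z)\,e^{\phi_+(w)}e^{\phi_-(z)}$, the resulting product formula with the Vandermonde prefactor, and the coefficient extraction against $\det\bigl(B_{\lambda_i-i+j}\bigr)$ all check out, and your closing remark correctly identifies the only point needing care (that each coefficient of $z_1^{a_1}\cdots z_\ell^{a_\ell}$ is a finite, well-defined polynomial). The paper itself gives no proof, deferring to Appendix~A of the Noumi--Yamada reference, and your argument is essentially the standard vertex-operator contraction computation carried out there, so the approaches coincide.
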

\noindent
The proof of \eqref{eq:XmXn}--\eqref{eq:SlamXlam} can be found in
\cite[Appendix A]{NY99}. As an immediate corollary we obtain the
following.
\begin{proposition}\label{prop:XmSlam}
  Let  $\lambda$ be a partion, $\Mlam\subset \Z$ the corresponding Maya diagram \eqref{eq:Mlam}, and
  \begin{equation}
    \label{eq:Jlam}
    \Jlam = \Z\setminus \Mlam
  \end{equation}
  the corresponding integer complement.  Then, for every $m\in\Z$ we
  have
  \begin{equation}
    \label{eq:Xmact}
    \bX_m  \Slam =
    \begin{cases} 
      (-1)^{\# \{ k\in \Mlam : k> m \}} S_{m\triangleright\lambda} &
      \text{ if } m\in \Jlam\\
      0 & \text{ if } m\in \Mlam
    \end{cases}.
  \end{equation}
\end{proposition}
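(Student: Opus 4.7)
The plan is to apply Proposition \ref{prop:SlamXl} to write $\Slam = \bX_{\lambda_1} \cdots \bX_{\lambda_\ell} \cdot 1$, and then to commute $\bX_m$ rightward through this product using the rearrangement $\bX_m \bX_n = -\bX_{n-1} \bX_{m+1}$ supplied by \eqref{eq:XmXn}. A straightforward induction on $j$ yields
\begin{equation*}
\bX_m \bX_{\lambda_1} \cdots \bX_{\lambda_j} = (-1)^j \bX_{\lambda_1 - 1} \cdots \bX_{\lambda_j - 1} \bX_{m+j},
\end{equation*}
so that each commutation contributes a sign of $-1$ while raising $m$ by one and lowering the crossed $\lambda_i$ by one.

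One should then halt after $j = \#\{k \in \Mlam : k > m\}$ commutations, i.e.\ the insertion rank of $m$ in the strictly decreasing enumeration $m_1 > m_2 > \cdots$ of $\Mlam$, where $m_i = \lambda_i - i$. If $m \in \Mlam$, then $m = m_{j+1} = \lambda_{j+1} - (j+1)$, so the shifted index satisfies $m + j = \lambda_{j+1} - 1$, and the factor $\bX_{m+j}$ ends up immediately to the left of the still-unprocessed $\bX_{\lambda_{j+1}}$. Specializing \eqref{eq:XmXn} to adjacent indices gives $\bX_{k-1}\bX_k = 0$, which forces the entire expression to vanish.

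If instead $m \in \Jlam$, the defining inequalities $m_j > m > m_{j+1}$ translate into $\lambda_j - 1 \geq m + j \geq \lambda_{j+1}$, so that
\begin{equation*}
\mu := (\lambda_1 - 1, \ldots, \lambda_j - 1,\, m+j,\, \lambda_{j+1}, \ldots, \lambda_\ell)
\end{equation*}
is a non-increasing sequence of non-negative integers, and hence a partition. A direct check against \eqref{eq:Mlam} shows $M_\mu = (\Mlam \cup \{m\}) - 1$, identifying $\mu$ with $m \triangleright \lambda$. Re-applying Proposition \ref{prop:SlamXl} to $\mu$ then recognises the resulting product as $(-1)^j S_{m \triangleright \lambda}$.

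The main obstacle is purely combinatorial bookkeeping: verifying that the tuple of shifted indices produced by the commutations coincides with the Maya-diagram insertion defining $m \triangleright \lambda$, and confirming that the halting index $j$ is precisely the quantity $\#\{k \in \Mlam : k > m\}$ governing the sign. Once these identifications are in place, the proof reduces to the single induction above combined with the adjacent annihilation identity $\bX_{k-1}\bX_k = 0$.
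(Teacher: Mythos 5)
Your argument is correct and follows exactly the route the paper intends: it derives the claim from Proposition~\ref{prop:SlamXl} together with the exchange relation \eqref{eq:XmXn}, which is precisely why the paper presents this proposition as an ``immediate corollary'' of those two facts and omits the details. Your write-up simply supplies the bookkeeping the paper leaves implicit --- the induction giving the sign $(-1)^{\#\{k\in\Mlam:\,k>m\}}$, the adjacent annihilation $\bX_{k-1}\bX_k=0$ for the case $m\in\Mlam$, and the identification of the resulting index sequence with $m\triangleright\lambda$ --- and all of these checks are sound.
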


By construction, the action of $\bV(\bt,z)$ on a polynomial
$P(\bt)\in \C[t_1,\ldots, t_n]$ can be given as
{\small
\begin{equation}
  \label{eq:bVP}
  \bV(\bt,z) P(\bt) = \exp\lp \sum_{k=1}^\infty t_k z^k \rp P\!\lp t_1-
  z^{-1}, t_2 - \frac{z^{-2}}{2}, \ldots, t_n - \frac{z^{-n}}{n}\rp.
\end{equation}}
Proposition~\ref{prop:XmSlam} allows the action of $\bV(\bt,z)$ on a Schur
polynomial to be conveniently written in terms of the ``insertion''
procedure:
\begin{theorem}
  Let $\lambda$ be a partition. With the above notation, we have
  \begin{equation}
    \label{eq:Vlam}
    \bV(\bt,z) \Slam(\bt) = \sum_{m\in\Jlam}
    (-1)^{\# \{ k\in \Mlam : k> m \}} S_{m\triangleright\lambda}(\bt)  z^m.
  \end{equation}
\end{theorem}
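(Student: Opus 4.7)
The plan is to derive this identity as a direct term-by-term consequence of Proposition~\ref{prop:XmSlam} combined with the defining expansion \eqref{eq:bVdef} of $\bV$ as a formal Laurent series in $z$.

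First I would justify that the expansion
\[ \bV(\bt,\partial_\bt, z)\Slam(\bt) = \sum_{m\in\Z} \bX_m(\bt,\partial_\bt)\,\Slam(\bt)\, z^m \]
is meaningful rather than merely formal. The only concern is the infinite sums in \eqref{eq:Xm+}--\eqref{eq:Xm-}: each $\bX_m$ is itself an infinite series of differential operators in $t_1, t_2, \ldots$. However, $\Slam$ is a polynomial in only $t_1,\ldots, t_N$ (with $N=|\lambda|$), so only finitely many terms of each $\bX_m$ act non-trivially on $\Slam$, and the action is well-defined. Alternatively, one may appeal directly to formula \eqref{eq:bVP}, which exhibits $\bV(\bt,z)\Slam(\bt)$ as a polynomial shift of $\Slam$ multiplied by an exponential prefactor, so the result is a bona fide Laurent series in $z$ with polynomial coefficients.

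Next I would simply apply Proposition~\ref{prop:XmSlam} coefficient-wise. For each $m\in\Z$, that proposition evaluates $\bX_m\Slam$: the coefficient of $z^m$ vanishes whenever $m\in\Mlam$, and equals
\[ (-1)^{\#\{k\in\Mlam : k>m\}}\, S_{m\triangleright\lambda}(\bt) \]
whenever $m\in\Jlam=\Z\setminus\Mlam$. Substituting back into the series and discarding the zero terms gives exactly \eqref{eq:Vlam}.

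There is essentially no obstacle beyond the bookkeeping just described; the theorem is a corollary of Proposition~\ref{prop:XmSlam} once the operator-valued generating function \eqref{eq:bVdef} is unfolded. The only subtlety worth flagging in the write-up is the justification that the term-by-term action is legitimate, since $\bV$ involves differential operators in infinitely many variables; this is handled either by the polynomial truncation above or by invoking \eqref{eq:bVP} directly.
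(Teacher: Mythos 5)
Your proposal is correct and matches the paper's (implicit) argument exactly: the paper presents this theorem as an immediate restatement of Proposition~\ref{prop:XmSlam} once $\bV$ is expanded via \eqref{eq:bVdef} as $\sum_m \bX_m z^m$. Your extra remark on why the term-by-term action on a polynomial is legitimate is a reasonable addition but does not change the route.
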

\subsection*{Hermite polynomials}
Hermite polynomials $H_n(x),\; n=0,1,\ldots$, are univariate
polynomials defined by
\begin{equation}
  \label{eq:HnRodrigues}
  H_n(x)=(-1)^ne^{x^2}\frac{d^n}{dx^n}e^{-x^2}, n=0,1,2,\dots\:.
\end{equation}
The $H_n(x)$ are known as classical orthogonal polynomials because they
satisfy a second-order eigenvalue equation
\begin{equation}
  \label{eq:hermde}
  y''-2xy' = 2n y,\quad y= H_n(x),
\end{equation}
as well as a 3-term recurrence relation
\begin{equation}
  \label{eq:h3term}
  H_{n+1}(x) =  2x H_n(x) - 2n H_{n-1}(x),
\end{equation}
which together imply the following orthogonality relation:
\begin{equation}
  \label{eq:hortho}
  \int_{\R} H_m(x) H_n(x) e^{-x^2} dx = \sqrt{\pi}\, 2^n
  n! \delta_{n,m}.
\end{equation}

The generating function for the Hermite polynomials is
\begin{align}
  \label{eq:hermgf}
    e^{xz - \tfrac14 z^2}
    &= \sum_{n=0}^\infty H_n(x) \frac{z^n}{2^nn!}.
\end{align}
To prove this statement \eqref{eq:hermgf}, write
\[ e^{xz - \tfrac14 z^2} = \sum_{n=0}^\infty f_n(x)
  \frac{z^n}{2^nn!}\] and observe that
\[
  (\partial_x+ 2\partial_z) \lp e^{xz - \tfrac14z^2- x^2}\rp =
  (\partial_x+ 2\partial_z) e^{-\tfrac14(2x-z)^2} =  0.
\]
Hence,
\[
  \partial_x e^{xz - \tfrac14z^2-x^2}
  = -2\partial_z e^{xz - \tfrac14z^2-x^2}
  = - \sum_{n=0}^\infty f_{n+1}(x) e^{-x^2} \frac{z^n}{2^n n!}.
\]
It follows that
\[ \frac{d}{dx} \lp f_n(x) e^{-x^2}\rp = - f_{n+1}(x)e^{-x^2},\] so since
$f_0(x)=1$, we conclude that $f_n(x) = H_n(x)$ for all
$n=0,1,\dots$.

Comparison of \eqref{eq:hermgf} with \eqref{eq:Bgf} shows that the
Hermite polynomials are specializations of Bell polynomials:
\[
  \begin{aligned}
    H_n(x) &= n! 2^n B_n(x,-\tfrac14,0,\ldots)   .\\
  \end{aligned}
\]
Applying \eqref{eq:Bksum} then gives the well-known formula
\[ H_n(x) = \sum_{j=0}^{\lfloor n/2\rfloor} (-1)^j \frac{n!}{(n-2j)!
    j!} (2x)^{n-2j}.\] 

In the sequel, we will also make use of the
conjugate Hermite polynomials.  These can be defined using the
following equivalent relations:
\begin{equation}
  \label{eq:tHndef}
 \begin{aligned}
   \tH_n(x) &= n! 2^n B_n(x,\tfrac14,0,\ldots),\quad   n=0,1,2,\ldots\\
   &=e^{-x^2}\frac{d^n}{dx^n}e^{x^2}\\
   &= i^{-n} H_n(i x).
 \end{aligned}
\end{equation}

\subsection*{The  harmonic oscillator}
Write $p= i \partial_x$, so that
\[ T(x,\partial_x) = p^2+x^2=- \partial_x^2 + x^2 \] is the 
Hamiltonian of the quantum harmonic oscillator.  We say that a function
$\psi(z)$ is quasi-rational if its log-derivative, $\psi'(z)/\psi(z)$,
is a rational function.  The quasi-rational eigenfunctions of $T$ are
\begin{equation}
  \label{eq:psindef}
\psi_n(x)=\begin{cases}
e^{-\tfrac{x^2}{2}}H_n(x), & n\geq 0\\
e^{\tfrac{x^2}{2}}\tH_{-n-1}(x), & n<0
\end{cases},
\end{equation}
where $H_n(x)$ and  $\tH_n(x)$ are the Hermite and conjugate Hermite
polynomials defined in \eqref{eq:HnRodrigues} and \eqref{eq:tHndef}. 
The eigenfunctions $\psi_n,\; n\ge 0$, represent the bound states of
the harmonic oscillator, while those with $n<0$ do not satisfy the boundary
conditions at $\pm\infty$ and instead represent virtual states. The corresponding eigenvalue relation is
\begin{equation}
  \label{eq:Tpsin}
  T\psi_n = (2n+1) \psi_n,\quad n\in \Z.
\end{equation}

It will be instructive to prove \eqref{eq:Tpsin} using generating
functions.  Multiplication of \eqref{eq:hermgf} by $e^{-x^2/2}$ yields
\begin{equation}
  \label{eq:Psi0def}
  \Psi_0(x,z) = e^{-\tfrac12(x-z)^2+\tfrac14z^2},
\end{equation}
which serves as
the generating function for the bound states of the harmonic
oscillator:
\begin{equation}
  \label{eq:psingf}
  \Psi_0(x,z) = \sum_{n=0}^\infty \psi_n(x) \frac{z^n}{2^n n!}.
\end{equation}

By a direct calculation, we have
\begin{equation}
  \label{eq:TPsi0}
   T(x,\partial_x) \Psi_0(x,z) = (2 z \partial_z+1)\Psi_0(x,z). 
\end{equation}
Applying the above relation to \eqref{eq:psingf} and comparing the
coefficients of the resulting power series then returns \eqref{eq:Tpsin}.

The classical ladder operators
\begin{equation}
  \label{eq:L+idef}
  \begin{aligned}
    L_{\mp}(x,\partial_x) := \partial_x \pm x
  \end{aligned}
\end{equation}
satisfy the intertwining relations
\[ T L_- = L_- (T-2),\qquad T L_+ = L_+(T+2).\]
As a consequence, we have the following lowering and raising relations
for the bound states:
\begin{equation}
  \label{eq:lrrel}
  \begin{aligned}
    L_- \psi_n &= 2n \psi_{n-1},\quad n=0,1,2,\ldots\\
    L_+ \psi_n &=  \psi_{n+1}.
  \end{aligned}
\end{equation}
Relations \eqref{eq:lrrel} can also be established using generating
functions; it suffices to observe that
\begin{align}
  \label{eq:L-Psi0}
  L_-(x,\partial_x) \Psi_0(x,z)
  &= z \Psi_0(x,z) = \sum_{n=0}^\infty 2n \psi_{n-1}(x)
    \frac{z^n}{2^n n!};\\
  L_+(x,\partial_x) \Psi_0(x,z) &= 2\partial_z \Psi_0(x,z) = \sum_{n=0}^\infty \psi_{n+1}(x)
                                  \frac{z^n}{2^n n!}.
\end{align}

\subsection*{The  canonical coherent state}

The canonical coherent state
\begin{equation}
  \label{eq:phidef}
 \Phi_0(x,t;\alpha) := \exp\lp -it + \tfrac12 x^2 - (x-\tfrac12 e^{-2
    it}\alpha)^2\rp   
\end{equation}
is a time-dependent eigenfunction of the lowering operator
\begin{equation}
  \label{eq:L-phi}
  L_-(x,\partial_x) \Phi_0(x,t;\alpha) = \alpha e^{-2it}
  \Phi_0(x,t;\alpha),
\end{equation}
as well as an exact solution to the time-dependent Schrodinger equation:
\begin{equation}
  \label{eq:IDtTphi}
  (-i \partial_t+ T(x,\partial_x)) \Phi_0(x,t;\alpha) = 0.
\end{equation}

Observe that 
\begin{equation}
  \label{eq:Phi0Psi0}
\Phi_0(x,t;\alpha) = e^{-it} \Psi_0(x,\alpha e^{-2it}).
\end{equation}
Hence, the canonical coherent state may also be regarded a generating
function for the bound states of the harmonic oscillator.  Indeed, the
change of variables \eqref{eq:Phi0Psi0} transforms the eigenvalue
relation \eqref{eq:L-phi} into \eqref{eq:L-Psi0}, and the
TDSE \eqref{eq:IDtTphi} into relation \eqref{eq:TPsi0}.

\subsection*{Pseudo-Wronskians}
Let $M\in \cM$ be a Maya diagram. Also, let  $0>s_1\ge\cdots \ge s_p$ be the
elements of $\KMneg$ and $0\le t_1\le \cdots\le t_q$ the elements of
$\KMpos$ arranged in the indicated order.   Define 
\begin{equation}\label{eq:pWdef1}
  \HM = e^{-px^2}\Wr[ e^{x^2} \tH_{-s_1-1},\ldots, e^{x^2}
  \tH_{-s_p-1}, H_{t_1},\ldots H_{t_q} ],
\end{equation}
where $H_n(x)$ and $\tH_n(x),\; n=0,1,2,\ldots$, are the classical Hermite
and conjugate Hermite polynomials, and where $\Wr$ denotes the
Wronskian determinant of the indicated functions.  The polynomial
nature of $\HM$ becomes evident once we represent it as the following
\textit{pseudo-Wronskian} determinant \cite{GGM18}:
\begin{equation}\label{eq:pWdef2}
  \HM =
  \begin{vmatrix}
    \tH_{s_1} & \tH_{s_1+1} & \ldots & \tH_{s_1+p+q-1}\\
    \vdots & \vdots & \ddots & \vdots\\
    \tH_{s_p} & \tH_{s_p+1} & \ldots & \tH_{s_p+p+q-1}\\
    H_{t_q} & D_x H_{t_q} & \ldots & D_x^{p+q-1}H_{t_q}\\
    \vdots & \vdots & \ddots & \vdots\\
    H_{t_1} & D_x H_{t_1} & \ldots & D_x^{p+q-1}H_{t_1}
  \end{vmatrix}.
\end{equation}
The proof of \eqref{eq:pWdef2} can be found in \cite{GGM18}.  The same
article also showed that the normalized polynomial
\begin{equation}
  \label{eq:hHdef}
\hH_M=\frac{(-1)^{pq}H_M}{\prod_{i<j}2(s_j-s_i)\prod_{i<j}2(t_j-t_i)}
\end{equation}
is translation invariant:
\begin{equation}
  \label{eq:HM+n}
  \hH_M=\hH_{M+n},\quad n\in\Z.
\end{equation}

 It follows that the normalized Hermite
pseudo-Wronskian \eqref{eq:hHdef} has the following expression in
terms of Schur functions:
\begin{equation}
  \label{eq:HMSlam}
  \hH_M(x) =  \frac{2^{N}N!}{d_\lambda} \, \Slam(x,-\tfrac14,0,\ldots ),
\end{equation}
where $N=|\lambda|$ and where $d_\lambda$ is the combinatorial factor
defined in \eqref{eq:dlamdef}.

\subsection*{ Rational Extensions}
In this section, we describe the correspondence between Maya diagrams
and rational extensions of the harmonic oscillator.

Let $M\in \cM$ be a Maya diagram. The
pseudo-Wronskian defined in \eqref{eq:pWdef1} can now be expressed
simply as
\begin{equation}
  \label{eq:HM}
  H_M(x)=e^{\sigma_M\tfrac{x^2}{2}}\Wr[\psi_{k_1}(x),\dots,\psi_{k_p}(x)],
\end{equation}
where $\psi_n(x),\; n\in \Z$, are the quasi-rational eigenfunctions
\eqref{eq:psindef}, and where $\sigma_M$ is the index defined in
\eqref{eq:sigmaM}.  The potential
\begin{align}
\label{rationalext}
  U_M(x)
  &=x^2-2\frac{d^2}{dx^2}\log\Wr[\psi_{k_1},\dots,\psi_{k_p}]\\ \nonumber
  &=x^2+2\left(\frac{H'_M}{H_M}\right)^2-\frac{2H_M^{''}}{H_M}-2\sigma_M   
\end{align}
is a rational extension of the harmonic oscillator potential,
so called because the terms following the $x^2$ in~\eqref{rationalext}
are all rational.

The corresponding Hamiltonian operator
\begin{equation}
  T_M:=-\frac{d^2}{dx^2}+U_M
\end{equation}
is  exactly solvable \cite{GGM13} with eigenfunctions
\begin{align}
\label{eigenstates}
  \psi_{M,m}=
  e^{\epsilon_M(m)\tfrac{x^2}{2}}\frac{\hH_{\flip_m(M)}}{\hH_M},\quad
  \epsilon_M(m) =
  \begin{cases}
    -1 & \text{ if } m\notin M\\
    +1 & \text{ if } m\in M\\
  \end{cases},\quad m\in \Z
\end{align}
and eigenvalues
\begin{equation}
  \label{eq:TMpsiMn}
  T_M\psi_{M,m}=(2m+1)\psi_{M,m},\quad m\in \Z.
\end{equation}

(In \eqref{eigenstates}, $\hH_M$ is the normalized pseudo-Wronskian defined in
\eqref{eq:hHdef}.) Relation \eqref{eq:HM+n} together with Eq.~\eqref{indextranslate}
implies that $T_M$ and the corresponding eigenfunctions are
translation covariant:
\begin{equation}
  \begin{aligned}
    T_{M+n}&=T_M+2n,\quad n\in\Z\\
    \psi_{M+n,m+n}&= \psi_{M,n}.
  \end{aligned}
\end{equation}

\smallskip As regards integrability, it should be noted that by the
Krein-Adler theorem \cite{GGM13}, $H_M$ has no real zeros if and only
if all finite $\filledbox$ segments of $M$ have even size. It is
precisely for such $M$ that $T_M$ corresponds to a self-adjoint
operator and that the eigenfunctions $\psi_{M,m}$ are
square-integrable for $m\in\Z\setminus M$. The set
\begin{equation}
  \label{eq:IMdef}
  I_M := \Z\setminus M = \Jlam + \sigma_M
\end{equation}
of empty boxes of $M$ then serves as the index set for the bound states.

The generating function for the bound states of a rational extension
can be given as follows:  
\begin{proposition}
For a partition $\lambda$, define
\begin{equation}
  \label{eq:Psilamdef}
  \Psilam(x,z) = \frac{\Slam\lp x-z^{-1}, -\tfrac14
    -\tfrac12z^{-2},-\tfrac13 z^{-3},\ldots\rp}{\Slam(x,-\tfrac14,0,\ldots)} \Psi_0(x,z).
\end{equation}
Let $M\in \cM$ be a Maya diagram and $\lambda$ the corresponding
partition as per Proposition \ref{prop:Mlambda}. Then
\begin{equation}
  \label{eq:Psilamgf}
  \Psilam(x,z) = \sum_{m\in I_M}  \psi_{M,m}(x)
  \frac{\prod_{i=1}^\ell (m-m_i)}{(m-\sigma_M+\ell)!}
  \lp\frac{z}{2}\rp^{m-\sigma_M}, 
\end{equation}
where $m_1>m_2>\cdots$ is the decreasing enumeration of $M$.
\end{proposition}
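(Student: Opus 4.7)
The plan is to identify the numerator $\Slam(x-z^{-1},-\tfrac14-\tfrac12 z^{-2},-\tfrac13 z^{-3},\ldots)$ as a specialization of the vertex-operator formula \eqref{eq:bVP}. Setting $\bt=(x,-\tfrac14,0,0,\ldots)$ in \eqref{eq:bVP} with $P=\Slam$ gives
\[
 \bV(\bt,z)\Slam(\bt)\big|_{\bt=(x,-\frac14,0,\ldots)} = e^{xz-z^2/4}\,\Slam\!\lp x-z^{-1},\, -\tfrac14-\tfrac12 z^{-2},\, -\tfrac13 z^{-3},\ldots\rp.
\]
Since $e^{xz-z^2/4}=e^{x^2/2}\Psi_0(x,z)$ by \eqref{eq:Psi0def}, dividing both sides by $e^{x^2/2}\Slam(x,-\tfrac14,0,\ldots)$ produces $\Psilam(x,z)$ on the right, while on the left I would apply Theorem \eqref{eq:Vlam} to write
\[
\bV(\bt,z)\Slam(\bt)=\sum_{m\in\Jlam}(-1)^{\epsilon(m)}\,S_{m\triangleright\lambda}(\bt)\,z^m,\quad \epsilon(m):=\#\{k\in\Mlam:k>m\}.
\]

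The next step is to convert Schur specializations into bound-state eigenfunctions. Using the Schur-to-Hermite identity \eqref{eq:HMSlam} for both $\Slam$ and $S_{m\triangleright\lambda}$, the translation invariance \eqref{eq:HM+n}, and the observation that $M_{m\triangleright\lambda}=\flip_m(\Mlam)-1$, one obtains $\hH_{M_{m\triangleright\lambda}}=\hH_{\flip_{\tilde m}(M)}$ where $\tilde m := m+\sigma_M\in I_M$. Formula \eqref{eigenstates} then identifies $e^{-x^2/2}\hH_{\flip_{\tilde m}(M)}/\hH_M$ with $\psi_{M,\tilde m}$ (since $\epsilon_M(\tilde m)=-1$ for $\tilde m\in I_M$). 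The expansion becomes
\[
 \Psilam(x,z)=\sum_{m\in\Jlam}(-1)^{\epsilon(m)}\,\frac{d_\mu\,2^{|\lambda|}|\lambda|!}{d_\lambda\,2^{|\mu|}|\mu|!}\,\psi_{M,m+\sigma_M}(x)\,z^m,\qquad \mu=m\triangleright\lambda.
\]

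The remaining task is to match this prefactor with $\prod_{i=1}^\ell(m-(\lambda_i-i))/(m+\ell)!\cdot 2^{-m}$, which after the substitution $\tilde m=m+\sigma_M$ and the identity $m_i=\lambda_i-i+\sigma_M$ reproduces \eqref{eq:Psilamgf}. The power of $z$ aligns via $|\mu|-|\lambda|=m$, a consequence of the fact that $\bX_m$ in \eqref{eq:Xm+}--\eqref{eq:Xm-} is homogeneous of degree $m$ under the grading $\deg t_k=k$. For the combinatorial prefactor I would apply the Maya-diagrammatic hooklength formula from the Preliminaries at a common shift $n\geq\ell+1$ for both $\lambda$ and $\mu$; the corresponding index sets differ by insertion of the single integer $m+n-1$ into the sorted decreasing sequence, and a direct Vandermonde-plus-factorial bookkeeping gives
\[
(-1)^{\epsilon(m)}\,\frac{\prod\hk_\lambda}{\prod\hk_\mu}=\frac{\prod_{i=1}^\ell(m-(\lambda_i-i))}{(m+\ell)!},
\]
where the ``tail'' factors $\prod_{i=\ell+1}^{n-1}(m+i)$ telescope against the factorial $(m+n-1)!$.

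The principal obstacle is this final hooklength ratio. One must track the shifts carefully, verify that the sign produced by moving $m+n-1$ into its sorted position in the index set reproduces $(-1)^{\epsilon(m)}$, and confirm that $(m+\ell)!$ is well-defined; this last point amounts to $m\geq-\ell$ for every $m\in\Jlam$, which follows from the fact that $\lambda_i-i=-i$ for $i>\ell$ forces all integers $<-\ell$ into $\Mlam$.
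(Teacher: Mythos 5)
Your proposal is correct and follows essentially the same route as the paper, whose proof is a one-line citation of \eqref{eq:bVdef}, \eqref{eq:Xmact} and \eqref{eq:bVP}; you have simply supplied the details — the specialization of the vertex operator at $\bt=(x,-\tfrac14,0,\ldots)$, the conversion of Schur specializations to pseudo-Wronskians via \eqref{eq:HMSlam} and translation invariance, and the hooklength/Vandermonde bookkeeping that produces the coefficient $\prod_{i=1}^\ell(m-m_i)/(m-\sigma_M+\ell)!$. The only slight looseness is the phrase ``common shift'': the beta-set of $\mu=m\triangleright\lambda$ at shift $n$ is the beta-set of $\lambda$ at shift $n-1$ together with the extra element $m+n-1$, so the two partitions are compared at shifts differing by one, but your telescoping of the tail factors against $(m+n-1)!$ and your identification of the sign with $(-1)^{\#\{k\in\Mlam\,:\,k>m\}}$ are exactly right.
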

\begin{proof}
  This follows from \eqref{eq:bVdef}, \eqref{eq:Xmact} and \eqref{eq:bVP}.
\end{proof}

\noindent
Observe that if $M=\Mz$ is the trivial Maya diagram, then
\eqref{eq:Psilamgf} reduces to the classical generating function shown
in \eqref{eq:psingf}.

\vspace{2ex}

\subsection*{Ladder Operators}
In this section, we introduce ladder operators for the rational
extensions $T_M,\; M\in \cM$, defined above.  

Intertwining relations
have their origins in supersymmetric quantum mechanics (SUSYQM).  For
differential operators $A$, $T_1$, $T_2$, we say that $A$ intertwines
$T_1$ and $T_2$ if
\begin{equation}
  \label{eq:intertwiner}
  AT_1=T_2A.
\end{equation}
We will refer to $A$ as a ladder operator if $T_2=T_1+\lambda$ for
some constant $\lambda$.  As a direct consequence of
\eqref{eq:intertwiner}, $A$ maps  eigenfunctions of $T_1$ to
eigenfunctions of $T_2$, possibly annihilating finitely many eigenfunctions.
This means that if $A$ is a ladder operator, then $A$ has  a
well-defined raising or lowering action on the states of $T_1$.

Within the class of rational extensions, the intertwiners take the form
\begin{equation}
  \label{eq:AMKdef}
  A_{M,K}[y]=\frac{\Wr[\psi_{M,m_1},\dots,\psi_{M,m_p},y]}{\Wr[\psi_{M,m_1},\dots,\psi_{M,m_p}]},
\end{equation}
where $M\in \cM$ is a Maya diagram, $K\in \cZ$ is a finite set
of integers, and where the $\Psi_{M,m}$ are the quasi-rational
eigenfunctions of $T_M$ defined in \eqref{eigenstates}.  By
construction, $A_{M,K}$ is a monic differential operator of order $p$
that intertwines $T_M$ and $T_{\flip_K(M)}$.  

Intertwiners
$A_{M_1,K_1}$ and $A_{M_2,K_2}$ such that $M_2= \flip_{K_1}(M_1)$ can be
composed according to 
\begin{equation}
\label{comp}
A_{M_2,K_2}\circ A_{M_1,K_1}=A_{M_1,K_{1}\ominus K_2}\circ p_{K_1,K_2}(T_M),
\end{equation}
where
\[
  \begin{aligned}
    p_{K_1,K_2}(x) &= \prod_{k\in K_1\cap K_2} (2k+1-x).
\end{aligned}
\]
Since $T_{M+n}=T_M+2n$, the above intertwiners
are also translation invariant:
\begin{equation}
\label{translation}
A_{M+n,K+n}=A_{M,K},\quad n\in\Z.
\end{equation}
For additional details, see Section 4 of \cite{GGMM20}.

Now fix an integer $n\in \Z$, and let
\begin{align}
  \label{eq:Kndef}
  \KMn &:= (M+n) \ominus M,\\
  \LMn &:= A_{M,\KMn}.
\end{align}
Theorem 4.1 of \cite{GGMM20} implies that, in
this particular case, the intertwining relation takes the form
\begin{equation}
\label{ladder}
\LMn T_M=(T_M+2n)\LMn.
\end{equation}
We will refer to such an $\LMn$ as a \textit{ladder operator} for the rational
extension $T_M$. 
The action of ladder operators on states is that of a lowering
or raising operator according to
\[ L_n[\psi_{M,k}] = C_{M,n,k} \psi_{M,k-n},\quad k\notin M, \] where
$C_{M,n,k}$ is zero if $\psi_{M,k-n}$ is not a bound state, i.e., if
$k-n\in M$. Otherwise, $C_{M,n,k}$ is a rational number whose explicit
form is given in \cite{GGMM20}.

\subsection*{The annihilator algebra}
In this section we describe the annihilation operators of a rational
extension.  The situation is more complicated than in the canonical
case, where the annihilation operators are generated by
$L_-=\partial_x+x$. For a non-trivial rational
extension, the annihilation operators form a non-trivial ring of
commuting operators with a structure determined by the combinatorics
of the corresponding Maya diagram, as we will now show.

For $q\in \N$, we say that a Maya diagram $M\in \cM$ is a $q$-core if
$M\subset M+q$. In such a case, the symmetric difference
$(M+q)\ominus M$ is nothing but the set difference $(M+q)\setminus M$.
It follows that the kernel of the ladder  operator $\LM^{(q)}$ is
spanned by $\psi_{M,m},\; m\in (M+q)\setminus M$.

We  say that $q\in \N$ is a critical degree of a Maya diagram
$M\in \cM$ if $M$ is a $q$-core.  Observe that if $q$ is a critical
degree of $M$, then $q$ is a critical degree of $M+n$ for every
$n\in \N$.  Thus, the $q$-core property is an attribute of an
unlabelled Maya diagram.  The set of unlabelled Maya diagrams is
naturally bijective to the set of partitions, and so we use $\Dlam$,
where $\lambda$ is the partition defined in \eqref{eq:lamfromM}, to
denote the set of all critical degrees of the unlabelled Maya diagram
$M/\Z$. This definition is consistent with the definition of
the $q$-core partition used in combinatorics; see \cite{Mc98} for more
details.

A $q\in \N$ fails to be in $\Dlam$ if and only if there exists an
occupied position on a Maya diagram $m\in M$ and an unoccupied
position $k\in I_M$ such that $k=m-q$.  The
smallest empty position on a Maya diagram occurs at position
$m_{\ell+1}+1 = \sigma_M-\ell$, while the largest occupied position occurs at
$m_1= \lambda_1-1+\sigma_M$.  It then follows that
\begin{equation}
  \label{eq:qcdef}
  q_c := m_1 - (\sigma_M-\ell)+1 =  \lambda_1+\ell
\end{equation}
is a threshold critical degree, in the sense that $q\in \Dlam$ for all
$ q\ge q_c$ and $q_c-1\notin \Dlam$.  See Figure
\ref{fig:lowering} for an example.

Let $M\in \cM$ be a Maya diagram and $\lambda$ the corresponding
partition. Set $\Rlam =\lspan \{ z^q \colon q\in \Dlam\}$, and observe
that if $q_1, q_2\in \Dlam$, then $q_1+q_2\in \Dlam$ also.  It follows
that $\Rlam$ is closed with respect to multiplication; i.e$.$ $\Rlam$ is
an operator algebra.  Also note that composition of annihilation operators on
the left of \eqref{eq:LqPsiz} is equivalent to multiplication of
eigenvalues on the right; it follows that $\Rlam$ is isomorphic to
the ring of annihilation operators associated with the rational
extension $T_M$.  

Relations \eqref{eq:Psilamgf} and \eqref{eq:LqPsiz}
entail the following action of the annihilators on the bound states:
\begin{equation}
  \label{eq:Lqpsim}
  \LM^{(q)}(x,\partial_x) \psi_{M,m}(x) = 2^q
  \gamma_{M}^{(q)}(m)\psi_{M,m-q}(x),
\end{equation}
where $m\in I_M, q\in \Dlam$, and where
\[ \gamma_M^{(q)}(x) = \prod_{k\in (M+q)\setminus M} (x-k).\] Observe
that $m-q\notin I_M$ if and only if $m\in M+q$. Hence,
$\gamma_M^{(q)}(m)=0$ when $\psi_{M,m}(x)$ is a
well-defined bound state, but $\psi_{M,m-q}(x)$ fails to satisfy the
boundary conditions at $\pm \infty$.

\subsection*{A Coherent State for Rational Extensions}
We are now ready to present the construction wherein one may
generalize the notion of a coherent state to any rational extension
$T_M$ of the harmonic oscillator potential.  We proceed, as in the
canonical case, by constructing the coherent state in terms of the
generating function. In \cite{KM20}, it was shown that, in terms of the
generating function $\Psilam(x,z)$, the eigenvalue relation
\eqref{eq:TMpsiMn} is equivalent to
\begin{equation}
  \label{eq:TMpsilam}
  T_M(x,\partial_x) \Psilam(x,z) = (z\partial_z + 1 + 2\sigma_M) \Psilam(x,z).
\end{equation}
Using the same change of variables as in \eqref{eq:Phi0Psi0}, let us
therefore set
\begin{equation}
  \label{eq:Philamdef}
 \Philam(x,z) = e^{-(1+2\sigma_M)it}\Psilam(x,\alpha e^{-2 it }).
\end{equation}
Then by construction, $\Philam(x,t)$ is an exact solution of the
time-dependent Schr\"odinger equation corresponding to the rational
extension $T_M$:
\begin{equation}
  \label{eq:PhilamTM}
    i \partial_t \Philam(x,t) = T_M(x,\partial_x) \Philam(x,t) .
\end{equation}

By Theorem 6.1 of \cite{KM20}, for every
critical degree $q\in \Dlam$, we have
\begin{equation}
  \label{eq:LqPsiz}
  \LM^{(q)}(x,\partial_x) \Psilam(x,z) = z^q \Psilam(x,z).
\end{equation}
In other words, the generating function is a joint eigenfunction of
the annihilator algebra.  From here, it isn't difficult to modify
\eqref{eq:LqPsiz} to obtain an eigenrelation for the coherent state
$\Philam(x,z)$ defined in \eqref{eq:Philamdef}.  Indeed, by a
straight-forward change of variables, we have that
\begin{equation}
  \label{eq:LqPhial}
  \LM^{(q)}(x,\partial_x) \Philam(x,t;\alpha) = \alpha^q e^{-2iqt} \Philam(x,z).
\end{equation}
Hence, $\Philam(x,t;\alpha)$ is a joint eigenfunction of the
annihilator algebra and satisfies the definition of a
coherent state in the sense of Barut-Girardello \cite{BG71}.

\subsection*{Example.} As an example, we construct the coherent state
corresponding to the index set $K=\{2,3\}$.  The corresponding Maya diagram, partition, and index are
\[ M= f_K(\Mz) = \Mz \cup \{ 2,3 \},\quad \lambda=(2,2),\quad
  \sigma_M=2,\]
while the corresponding rational extension is
\[ T_M(x,\partial_x) =-\partial_x^2+ \left(x^2+4+\frac{32 x^2}{4
      x^4+3}-\frac{384 x^2}{\left(4 x^4+3\right)^2}\right). \]

    Table \ref{tab:HMm} shows
the numerator polynomials of the first few bound states of $T_M$. These bound states are indexed by
\[ I_M  = \{ 0,1,4,5,6,\ldots \} , \]
and explicitly, the bound state with eigenvalue $2m+1,\; m\in I_M$, can be written as
\[ \psi_{M,m} =  e^{-\tfrac12x^2} \frac{H_{M,m}(x)}{4x^4+3}
,\quad m\in \Jlam+2,\] where
\[ H_{M,m} = \frac{\Wr(2x^2-1,2x^3-3,H_m)}{4 (m-2)(m-3)(4
	x^4+3)},\quad m\ge 0,\; m\neq 2,3. \]

\begin{table}
  \centering
  \begin{tabular}{r|l}
    $m$ & $H_{M,m}(x)$\\ \hline
    0 & $x^2+\frac{1}{2}$\\
    1 & $2 x^3+3 x$\\
    4 & $16 x^6+24 x^4+36 x^2-18$\\
    5 & $32 x^7+16 x^5+40 x^3-60 x$\\
    6 & $64   x^8-64 x^6-240 x^2+60$
  \end{tabular}
  \caption{Bound states of the rational extension corresponding to the
    index set $K=\{ 2,3\}$.}
  \label{tab:HMm}
\end{table}

In this case, the Schur function is
\[ \Psilam(t_1,t_2,t_3) = \frac{t_1^4}{12}+t_2^2-t_1 t_3.\] Using
\eqref{eq:Psilamdef}, the generating function for the bound states is
therefore
\[ \Psilam(x,z) = \lp 1- \frac{16x^3}{4x^4+3} z^{-1}+
  \frac{12(2x^2+1)}{4x^4+3} z^{-2} \rp
  e^{-\tfrac12(x-z)^2+\tfrac14z^2}.\]

The set of critical degrees, meanwhile, is $\Dlam = \{ 4,5,\ldots \}$. Note that there are no critical degrees below the threshold $q_c=2+2=4$.
Figure \ref{fig:lowering} illustrates the fact that $q=4$ is a
critical degree and that $q=3$ fails to be a critical degree since
$0+3\in M$ but $0\notin M$.

The coherent state wave function
\[ \Philam(x,t;\alpha) = e^{-5 it} \Psilam(x,\alpha e^{-2it}) \] is an
exact solution of the corresponding time-dependent Schr\"odinger
equation \eqref{eq:PhilamTM}.
The annihilators 
\begin{align*}
  \LM^{(4)}  &= A_{M,\{0,1,6,7\}},\\
  \LM^{(5)}  &= A_{M,\{0,1,4,7,8\}},\\
  \LM^{(6)}  &= A_{M,\{0,1,4,5,8,9\}},\\
  \LM^{(7)}  &= A_{M,\{0,1,4,5,6,9,10\}}
\end{align*}
are commuting differential operators that generate the annihilator
algebra of this rational extension.  In each case, direct calculation
shows that $\Philam(x,t;\alpha)$ is an eigenfunction of $\LM^{(q)}$
with eigenvalue $\alpha^q e^{-2qi t}$.

\begin{figure}[h]
	\centering
	\begin{tikzpicture}[scale=0.6]
	
	\path [fill] (0.5,2.5) node {\huge ...}
	++(1,0) circle (5pt) ++(1,0) circle (5pt)  ++(1,0) circle (5pt)
	++(3,0) circle (5pt) ++(1,0) circle (5pt)
	++(9,0) node {\huge ...} +(1,0);

	\foreach \x in {-3,...,11} 	\draw (\x+4.5,3.5)  node {$\x$};
	\path (6.5,0.5) node {} ++ (1,0) node {}
	++ (2,0) node {}++ (2,0) node {}++ (3,0) node {}
	;
	
	\draw  (1,0) grid +(15 ,3);
	
	\path [fill] (0.5,1.5) node {\huge ...}
	++(1,0) circle (5pt)
        ++(1,0) circle (5pt)
	++(1,0) circle (5pt)
	++(1,0) circle (5pt)
	++(1,0) circle (5pt)
	++(1,0) circle (5pt)
	++(3,0) circle (5pt) ++(1,0) circle (5pt)
	++(6,0) node {\huge ...} ;
	\path [fill] (0.5,0.5) node {\huge ...}
	++(1,0) circle (5pt) ++(1,0) circle (5pt)  ++(1,0) circle (5pt)
	++(1,0) circle (5pt)
	++(1,0) circle (5pt)
	++(1,0) circle (5pt)
	++(1,0) circle (5pt)
	++(3,0) circle (5pt) ++(1,0) circle (5pt)
	++(5,0) node {\huge ...} +(1,0);
	
	\draw[line width=1pt] (4,1) -- ++ (0,1.5);
	
	\end{tikzpicture}
	\captionsetup{font=footnotesize}
	\caption{Top: The Maya diagram $M$ corresponding to index set
          $K=\{2,3\}$.  The corresponding partition and index are
          $\lambda=(2,2)$ and $\sigma_M=2$, respectively, while the threshold
          critical degree is $q_c = 4$. Middle: $M+3$. Bottom:
          $M+4$. Note that $4$ is a critical degree since
          $M\subset M+4$. However, $3$ fails to be a critical degree
          since $3\in M$ but $3\notin M+3$. }
        \label{fig:lowering}
\end{figure}

\end{document}